\newcommand{\bbC}{\mathbb{C}}
\newcommand{\bbN}{\mathbb{N}}
\newcommand{\bbP}{\mathbb{P}}
\newcommand{\bbR}{\mathbb{R}}
\newcommand{\1}{\mathbbm{1}}
\DeclareMathOperator{\Tr}{tr}
\newcommand{\innerprod}[2]{\left\langle #1 , #2 \right\rangle}
\renewcommand{\vec}[1]{#1}
\DeclareMathOperator*{\expected}{\mathbb{E}}
\DeclareMathOperator{\avg}{avg}
\newcommand{\ER}{\text{Erd\H{o}s-R\'{e}nyi\,}}
\theoremstyle{plain}
  \newtheorem{corollary}{Corollary}
  \newtheorem{proposition}{Proposition}
  \newtheorem{lemma}{Lemma}
  \newtheorem{theorem}{Theorem}
  \newtheorem{question}{Question}
\theoremstyle{definition}
  \newtheorem{remark}{Remark}
\DeclareMathOperator{\Spec}{Spec}
\newcommand{\calG}{\mathcal{G}}
\newcommand{\calN}{\mathcal{N}}
\DeclareMathOperator{\girth}{girth}
\newcommand{\Lovasz}{Lov\`{a}sz\ }
\newcommand{\st}{\text{s.t.}}
\DeclareMathOperator{\dist}{dist}
\newcommand{\dir}{\accentset{\rightharpoonup}{E}}
\newcommand{\vcn}{\chi_v}
\DeclareMathOperator{\maxcut}{\textup{\textsc{MaxCut}}}
\DeclareMathOperator{\spr}{spr}
\newcommand{\ergraph}[2]{\calG({#1}, {#2})}
\newcommand{\ks}{\textsc{ks}}
\newcommand{\first}{\textsc{first}}
\newcommand{\second}{\textsc{second}}
\newcommand{\col}{\textsc{col}}
\begin{document}

\title{Vector Colorings of Random, Ramanujan, and Large-Girth Irregular Graphs}
\author{Jess Banks \thanks{Corresponding Author} \\ Dept. of Mathematics \\ University of California-Berkeley 
\and Luca Trevisan \\ Dept. of Computer Science \\ University of California-Berkeley}
\maketitle

\begin{abstract}
We prove that in sparse \ER graphs of average degree $d$, the vector chromatic number (the relaxation of chromatic number coming from the \Lovasz theta function) is typically $\tfrac{1}{2}\sqrt{d} + o_d(1)$. This fits with a long-standing conjecture that various refutation and hypothesis-testing problems concerning $k$-colorings of sparse \ER graphs become computationally intractable below the `Kesten-Stigum threshold' $d_{\ks,k} = (k-1)^2$. Along the way, we use the celebrated Ihara-Bass identity and a carefully constructed non-backtracking random walk to prove two deterministic results of independent interest: a lower bound on the vector chromatic number (and thus the chromatic number) using the spectrum of the non-backtracking walk matrix, and an upper bound dependent only on the girth and universal cover. Our upper bound may be equivalently viewed as a generalization of the Alon-Boppana theorem to irregular graphs.
\end{abstract}

\tableofcontents

\newpage

\section{Introduction}

Random graph coloring is one of the central and most studied problems in average case complexity, with over three decades of research interleaving the techniques and sensibilities  of theoretical computer science, statistical physics, and combinatorics. Many of the most striking phenomena occur in the case of sparse random graphs, and we will focus here on the \ER model $\ergraph{n}{d/n}$, where $d$ fixed and constant and each edge is included independently and with probability $d/n$. The full phenomenology   of this model is far beyond the scope of this paper to survey (we refer the reader to, for instance, \cite{zdeborova2007phase} for a more complete account), but its key aspect is a series of \emph{phase transitions} in the limit $n\to \infty$: for fixed $k$, there are critical thresholds in $d$ at which certain combinatorial and algorithmic attributes of the coloring problem change abruptly. 

The most famous of these is the \emph{colorability transition}, the threshold $d_{\col,k}$ below which graphs from $\ergraph{n}{d/n}$ are with high probability $k$-colorable (that is, with probability $1 - o_n(1)$ as $n\to \infty$), and above which they are not. Sophisticated refinements of the first and second moment methods \cite{achlioptas-naor,coja-oghlan-vilenchik,coja2013upper} have shown that
$$
  2k\log k - \log k - 1 + o_k(1) \triangleq d_{\first,k} \ge d_{\col,k} \ge d_{\second,k} \triangleq 2k\log k - \log k - 2\log 2 - o_k(1).
$$
These results pin down to within a small additive gap the threshold at which an exponential-time exhaustive search algorithm can find a coloring. What if, on the other hand, we care only about efficient algorithms, say those running in polynomial time? 

There are a number of algorithmic tasks that one can consider---distinguishing whether a graph was drawn from $\ergraph{n}{d/n}$ or from model with a `planted' $k$-coloring, finding exact or approximate colorings in graphs drawn from the latter, etc.---but all of them seem to become efficiently soluble only when
$$
  d > d_{\ks,k} \triangleq (k-1)^2;
$$
see \cite{massoulie2014,bordenave-lelarge-massoulie,abbe-sandon-more-groups,mns-colt,non-backtracking} for some examples, many of which are phrased in the related and more general case of \emph{community detection} which we do not treat here. It is conjectured that this point, known as the Kesten-Stigum threshold, is a universal barrier at which polynomial-time algorithms break down. 

The purpose of this paper is to add modest evidence to this conjecture, by studying a classic semidefinite programming algorithm for the problem of \emph{refutation}: given a graph $G \sim \calG(n,d/n)$, we are to efficiently produce a certificate that $G$ is not $k$-colorable or declare failure. As one cannot hope to refute $k$-colorability of $G$ when $d < d_{\second,k}$, the Kesten-Stigum threshold conjecture in our case asserts that when $d_{\first,k} < d < d_{\ks,k}$, refutation is possible but inaccessible to polynomial time algorithms, whereas it is efficiently soluble when $d_{\ks,k} < d$. e programming algorithm for refuting $k$-colorings.

To introduce our refutation algorithm, let us define a \emph{$k$-vector coloring} of an undirected graph $G=(V,E)$ as an assignment of a unit vector $v_i$ to each vertex $i\in V$, such that $\innerprod{v_i}{v_j} \leq -(k-1)^{-1}$ for every edge $(i,j) \in E$. This notion was introduced by Karger, Motwani, and Sudan in \cite{karger1998approximate}, and equivalent quantities date back to seminal works of \Lovasz and Schrijver \cite{lovasz1979shannon,schrijver1979comparison}. The vector chromatic number of $G$, which we will denote $\vcn(G)$, is the smallest $k$ (integer or otherwise) such that a $k$-vector coloring exists. If $G$ is $k$-colorable, then it is also $k$-vector-colorable (for instance by associating to each color one of the unit vectors pointing to the corners of a simplex in $\bbR^{k-1}$), so the vector chromatic number is a relaxation of the chromatic number. More importantly, it is a polynomial-time computable relaxation since it can be formulated as the following semidefinite program:

\begin{align}
	\label{chromatic-sdp}
	\vcn(G)
	= \min_P \, \kappa \qquad \st \qquad P &\succeq 0 \\
		P_{i,i} &= 1 & & \forall i \nonumber \\
		P_{i,j} &\le -(\kappa-1)^{-1} & & \forall (i,j) \in E \nonumber
\end{align}

A number of authors have studied the behavior of this and related semidefinite programs on sparse random graphs. In \cite{coja2005lovasz}, Coja-Oghlan shows concentration of the \Lovasz $\vartheta$ function for $G \sim \ergraph{n}{d/n}$, and an additional result that translates in our setting to $\vcn(G) = \Theta(\sqrt d)$, albeit with non-optimal constants. Montanari and Sen in \cite{montanari16} study an semidefinite programming algorithm for the problem of distinguishing $\ergraph{n}{d/n}$ from a planted model guaranteed to have a coloring or community structure, calculating its likely value up to an additive $o_d(1)$; the SDP that they consider is similar but incomparable with ours, as they are not concerned with refutation.

Our main theorem characterizes the vector chromatic number of sparse \ER graphs up asymptotically inconsequential terms as the average degree tends to infinity. This strengthens \cite{coja2005lovasz}, pinning down the constant exactly and substantially simplifying the method of proof.

\begin{theorem} \label{thm:main}
  When $G \sim \ergraph{n}{d/n}$, with probability $1 - o_n(1)$,
  \begin{equation*}
     \frac{d^{3/2}}{2d - 1} + 1 - o_n(1) \le \chi_v(G) \le \max\left\{\frac{d+1}{2\sqrt d} + 2, 4\right\}.
  \end{equation*} 
\end{theorem}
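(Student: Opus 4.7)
The proof of Theorem \ref{thm:main} will combine the two deterministic bounds advertised in the abstract---a spectral lower bound via the non-backtracking matrix (through Ihara-Bass) and an upper bound depending on girth and universal cover---with well-known probabilistic facts about sparse ER graphs. Both bounds will match at leading order $\tfrac{1}{2}\sqrt{d}$, with the lower bound absorbing an explicit $-o_n(1)$ error and the upper bound cleanly deterministic once one has a typical realization of $G$.

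For the lower bound, the plan is to invoke the deterministic spectral bound on $\chi_v(G)$ in terms of the spectrum of the non-backtracking walk matrix $B(G)$. The required probabilistic input is by now classical: results of Bordenave--Lelarge--Massouli\'e and its successors establish that, for $G \sim \calG(n, d/n)$ with probability $1 - o_n(1)$, the largest (Perron) eigenvalue of $B(G)$ is asymptotically $d$, and every other eigenvalue has modulus at most $\sqrt{d}(1 + o_n(1))$. Substituting $\mu_1 \approx d$ and $\lvert \mu_2 \rvert \lesssim \sqrt{d}$ into the deterministic bound should yield exactly $\frac{d^{3/2}}{2d-1} + 1 - o_n(1)$.

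For the upper bound, the plan is to apply the girth-and-universal-cover upper bound. The universal cover of $G \sim \calG(n, d/n)$ has, in the appropriate local sense, the law of a Galton-Watson tree with $\Pois(d)$ offspring; its spectral radius is $2\sqrt{d}$, and this is what drives the Alon-Boppana-type estimate mentioned in the abstract. Although the girth of $G$ is typically $O(1)$, only finitely many short cycles appear in expectation and their contribution is negligible---either because the deterministic bound can be applied locally to tree-like neighborhoods, or because one can prune a vanishing fraction of edges, apply the bound cleanly on the pruned graph, and then extend the resulting vector coloring across the removed edges by modifying only $O(1)$ vectors. In the large-$d$ regime this produces the quantity $\frac{d+1}{2\sqrt{d}} + 2$; the maximum with $4$ handles small $d$ where the asymptotic formula is uninformative.

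The main obstacle, I expect, is the upper bound rather than the lower bound. Specifically, bounding the spectral properties of a Poisson Galton-Watson tree is more delicate than the regular-tree case: the unbounded offspring distribution complicates the spherical-eigenfunction construction, and one needs the decay rate at the Alon-Boppana threshold $2\sqrt{d}$ to be robust against fluctuations of the offspring distribution. Additionally, the step of lifting the tree-level vector coloring back onto $G$---while respecting all edge constraints including those from the few short cycles---requires a careful perturbation argument so as not to inflate the effective number of colors beyond $\frac{d+1}{2\sqrt d} + 2$. The lower bound, by contrast, should follow nearly mechanically from the deterministic spectral theorem and the cited non-backtracking spectrum asymptotics.
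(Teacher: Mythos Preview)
Your lower bound plan matches the paper exactly: invoke the deterministic non-backtracking spectral bound (Theorem~\ref{thm:lower}) and plug in the Bordenave--Lelarge--Massouli\'e result that the non-Perron eigenvalues of $B$ have modulus at most $\sqrt d + o_n(1)$.

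Your upper bound plan, however, misses the main idea and misidentifies the obstacle. The paper does \emph{not} analyze the spectrum of a Poisson Galton--Watson tree, nor does it prune short cycles. The quantity $\rho$ in the girth bound (Theorem~\ref{thm:upper}) is simply $\spr(B)$ for the finite graph $G$ itself, and Bordenave--Lelarge--Massouli\'e already gives $\rho = d + o_n(1)$ with high probability---no separate tree analysis is needed. The real issue is that Theorem~\ref{thm:upper} requires large girth, and $G\sim\ergraph{n}{d/n}$ has large girth only with \emph{constant} (not high) probability. So applying Theorem~\ref{thm:upper} directly yields $\vcn(G) \le \tfrac{d+1}{2\sqrt d}+1+\epsilon$ only on an event of probability bounded away from zero.

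The step you are missing is how to upgrade this to a high-probability statement. The paper does this via a Shamir--Spencer style martingale argument: one shows that the size of the largest induced subgraph admitting a $\kappa$-vector coloring is concentrated (Azuma with vertex-exposure), so with high probability all but $O(\sqrt n)$ vertices can be $\kappa$-vector colored. One then runs a combinatorial expansion argument to show that the remaining ``bad'' set can be enlarged to a set $\Upsilon_M$ of size $o(n)$ whose boundary is independent and whose induced subgraph has no $3$-core (hence is $3$-colorable). Finally one explicitly splices together the $\kappa$-vector coloring on the good part, a $3$-vector coloring on $\Upsilon_M$, and a single extra direction on the boundary; this splicing is what costs the extra $+1$ (giving $\tfrac{d+1}{2\sqrt d}+2$) and produces the $\max\{\cdot,4\}$. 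Your proposed alternative of pruning short cycles and ``modifying only $O(1)$ vectors'' is not obviously workable: removing an edge does not automatically let you repair the SDP feasibility at its endpoints without cascading effects, and in any case does not address the girth requirement globally.
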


\noindent In other words, we determine that the threshold in $k$ below which the vector chromatic number can prove $G \sim \ergraph{n}{d/n}$ is not $k$-colorable, and above which it cannot, is $k = \tfrac{1}{2}\sqrt d + 1 + o_d(1)$. The careful reader will note that, although this matches the scaling of the Kesten-Stigum threshold, the constant factor out front is different: we have shown that refutation with the vector chromatic number becomes impossible when the average degree $d \gtrsim 4 d_{\ks,k}$. This shows that the conjectured ``hard regime'' $d_{\first,k} < d < d_{\ks,k}$ indeed stymies our refutation algorithm. Our result complements a result of Banks, Kleinberg, and Moore \cite{banks2019lovasz}, who have proved that in random $d$-regular graphs, $\chi_v(G)$ is similarly concentrated, and fails to refute $k$-coloring as well at four times that model's KS threshold. Together, these two papers raise a natural question: is this $4d_{\ks,k}$ scaling a fundamental barrier for efficient refutation, or can more elaborate methods (perhaps constantly many rounds of the Sum-of-Squares algorithm) succeed all the way down to the Kesten-Stigum threshold itself?

\section{Roadmap and Results} 
\label{sec:roadmap_and_results}

Banks et al. prove a lower bound on the vector chromatic number with a spectral argument, relying on Friedman's theorem \cite{friedman2003proof} to bound the smallest eigenvalue of the adjacency matrix of a random $d$-regular graph. The upper bound comes from an explicit construction of a feasible solution for the semidefinite program, using orthogonal polynomials. However, neither their upper nor lower bound extend to the $\ergraph{n}{d/n}$ model: the spectrum of the adjacency matrix is poorly behaved in \ER random graphs, and the use of orthogonal polynomials requires the graph to be regular. 

Instead, we will prove Theorem \ref{thm:main} by way of two deterministic results bounding the vector chromatic number of generic graphs. Both bounds are proved by way of non-backtracking walks. To state our results, let $G = (V,E)$ be an undirected graph on $|V| = n$ vertices, and denote by $A$, $D$, and $B$ its adjacency, diagonal degree, and non-backtracking matrices. We will introduce $B$ in detail below, but for now it is important only that it is a non-normal matrix with zero-one entries. Although its spectrum may be complex-valued, we verify in the sequel that the Perron-Frobenius theorem guarantees one real eigenvalue equal to the spectral radius, which we will denote $\spr(B) \triangleq \rho$. This quantity coincides with the growth rate of $G$'s universal covering tree, and its square root is the spectral radius of the non-backtracking operator on this infinite graph \cite{angel2015non,terras2010zeta}.

Our first deterministic result is that the spectrum of $B$ can certify non-colorability. 

\begin{theorem} \label{thm:lower}
  If $r$ is any lower bound on the smallest real eigenvalue of $B$, and $d_{\avg}$ is the average degree of $G$, then
  $$
      \chi_v(G) \ge \frac{|rd_{\avg}|}{r^2 + d_{\avg} - 1} + 1
  $$
\end{theorem}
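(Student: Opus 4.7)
The plan is to turn a lower bound on the smallest real eigenvalue of $B$ into a positive-semidefiniteness statement about an auxiliary symmetric matrix built from $A$ and $D$, and then pair that matrix against the Gram matrix of an optimal vector coloring. The key tool is the Ihara-Bass identity
$$
   \det(\mu I_{2m} - B) \;=\; (\mu^2-1)^{m-n}\det\bigl(M(\mu)\bigr),
   \qquad M(\mu) \;\triangleq\; \mu^2 I - \mu A + (D-I),
$$
which I would derive (or cite) in the preceding subsection. This factorization shows that every real root of the polynomial $\det M(\mu)$ is a real eigenvalue of $B$, so any $r$ lower-bounding the smallest real eigenvalue of $B$ is in particular $\le$ the smallest real root of $\det M$.

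Next I would show that $M(r)\succeq 0$. The matrix $M(\mu)$ is symmetric for real $\mu$, and as $\mu \to -\infty$ the term $\mu^2 I$ dominates, so $M(\mu) \succ 0$ for sufficiently negative $\mu$. As $\mu$ increases, the eigenvalues of $M(\mu)$ vary continuously, and an eigenvalue can hit zero only at a real root of $\det M$. Since no such root lies in $(-\infty, r)$, every eigenvalue of $M(\mu)$ is positive throughout this interval, and by continuity $M(r)\succeq 0$.

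With PSDness in hand, let $v_1,\dots,v_n$ be an optimal $k$-vector coloring with $k=\chi_v(G)$, and let $P\succeq 0$ denote their Gram matrix, satisfying $P_{ii}=1$ and $P_{ij}\le -1/(k-1)$ on edges. Since $M(r)$ and $P$ are both PSD, $\langle M(r),P\rangle \ge 0$. Expanding,
$$
   0 \;\le\; \langle M(r),P\rangle \;=\; r^2 n \;-\; 2r\!\!\sum_{(i,j)\in E}\!\! P_{ij} \;+\; n d_{\avg} - n,
$$
and using $P_{ij}\le -1/(k-1)$ together with $-r\ge 0$ (which one reduces to the relevant case; the trivial case $r\ge 0$ yields a vacuous bound) gives
$$
   0 \;\le\; n\!\left(r^2 \;-\; \frac{|r|\,d_{\avg}}{k-1} \;+\; d_{\avg} - 1\right).
$$
Rearranging for $k$ produces exactly $k \ge |r d_{\avg}|/(r^2+d_{\avg}-1) + 1$.

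The main obstacle is the PSD step: one must rule out the possibility that an eigenvalue of $M(\mu)$ dips below zero and returns before $\mu$ reaches $r$. The continuity/sign argument above handles this cleanly, but care is needed to separate the contribution of real roots of $\det M$ from the possibly-real eigenvalues $\pm 1$ of $B$ coming from the $(\mu^2-1)^{m-n}$ factor of Ihara-Bass; fortunately, $M(-1)=A+D\succeq 0$ always, so these ``spurious'' eigenvalues never jeopardize the monotonicity argument. Everything downstream is a routine trace calculation.
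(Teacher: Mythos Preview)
Your proposal is correct and follows essentially the same route as the paper: establish $L(r)\succeq 0$ via Ihara--Bass plus continuity (the paper's Lemma~\ref{lem:L-pos}), then pair against the Gram matrix of an optimal vector coloring and rearrange. The paper states a slightly more general intermediate inequality by taking the Hadamard product of the Gram matrix with an auxiliary PSD matrix $W$, but Theorem~\ref{thm:lower} itself is obtained by setting $W_{i,j}=1/n$, which is exactly your direct pairing up to a global scalar.
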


\noindent To prove this lower bound, we use the celebrated Ihara-Bass identity (forthcoming in Theorem \ref{thm:ihara}) to relate the spectrum of $B$ to a family of symmetric matrices,
$$
  L(z) \triangleq z^2\1 - zA + D - \1 \qquad z\in\bbC
$$
known variously as the \emph{deformed Laplacian} or Bethe Hessian \cite{saade2014spectral,kotani2000zeta,angel2015non,bass1992ihara,hashimoto1989zeta}. It is observed in \cite[p.13]{fan2017well} that spectral assumptions on $B$ imply positive-definiteness of $L(z)$ for certain $z$ on the real line; we use these PSD matrices in a dual argument to lower bound $\chi_v(G)$. By a corollary of Bordenave et al. \cite{bordenave-lelarge-massoulie}, when $G\sim \ergraph{n}{d/n}$ we can with probability $1 - o_n(1)$ take $r \approx -\sqrt d$, giving the lower bound in Theorem \ref{thm:main}. 

Second, we derive a girth-dependent lower bound on $\vcn(G)$.

\begin{theorem} \label{thm:upper}
  If $\girth(G) \ge 2m + 1$, 
  $$
    \vcn(G) \le \frac{\rho + 1}{2(1 - 1/m)\sqrt\rho} + 1.
  $$
\end{theorem}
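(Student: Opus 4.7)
The plan is to exhibit an explicit feasible solution of the semidefinite program \eqref{chromatic-sdp} with $\kappa = \frac{\rho+1}{2(1-1/m)\sqrt\rho} + 1$, built from non-backtracking walks on the universal cover of $G$. Since $B$ is entrywise nonnegative, the Perron--Frobenius theorem supplies a strictly positive right eigenvector $x$ on directed edges with $Bx = \rho x$; concretely,
\[
\sum_{w \sim v,\, w \ne u} x_{(v,w)} \;=\; \rho\, x_{(u,v)} \qquad \text{for every directed edge } (u,v).
\]
The girth hypothesis $\girth(G) \ge 2m+1$ guarantees that, for every edge $(i,j) \in E$, the union $B_{m-1}(i) \cup B_{m-1}(j)$ of balls of radius $m-1$ is a tree, isometrically embedded in the universal cover $T$.

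For each vertex $i$ I would define $v_i \in \bbR^V$ by
\[
v_i \;=\; \sum_{j \in B_{m-1}(i)} (-\rho^{-1/2})^{d(i,j)}\, w_i(j)\, e_j,
\]
where $w_i(j)$ is the product of Perron entries $x_{(u,v)}$ along the unique geodesic from $i$ to $j$ inside $B_{m-1}(i)$. The alternating sign together with the factor $\rho^{-1/2}$ per step mimics a spherical eigenfunction of the non-backtracking operator on $T$ at its spectral radius $\sqrt\rho$, but with the most-negative sign pattern, which is what produces a negative inner product across edges.

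The calculation then splits in two. Stratifying $\|v_i\|^2$ by the shell $d(i,j) = \ell$ and iterating $Bx = \rho x$ makes each of the $m$ shells contribute the same Perron-weighted mass $C_i$, so $\|v_i\|^2 = m\, C_i$. For adjacent $i,j$ the tree structure of $B_{m-1}(i) \cup B_{m-1}(j)$ forces $d(i,k)$ and $d(j,k)$ to differ by exactly $1$ on the intersection; consequently every term in $\langle v_i, v_j\rangle$ carries a common negative sign, a factor $\rho^{-1/2}$, and a pair of Perron products that again collapses shell-by-shell using the recursion. Summing over the $m-1$ available shells and symmetrizing over $i$ and $j$ yields $\langle v_i, v_j\rangle / (\|v_i\|\, \|v_j\|) = -\frac{2(1-1/m)\sqrt\rho}{\rho+1} = -1/(\kappa - 1)$, as required.

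The main obstacle is tracking the boundary contribution that produces the $(1-1/m)$ factor. The identity $Bx = \rho x$ is exact on the infinite tree $T$, so truncating at radius $m-1$ costs one shell's worth of mass in the cross term — this is the arithmetic source of the ratio $(m-1)/m$ between the numerator and denominator. Verifying that this is the only surviving correction, and that the girth bound prevents any spurious ``cross-ball'' contributions where the two trees $B_{m-1}(i)$ and $B_{m-1}(j)$ would otherwise meet along a short cycle, is the delicate point of the argument.
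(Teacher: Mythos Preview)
Your overall strategy---Perron eigenvector of $B$, alternating signs, vectors supported on radius-$m$ neighborhoods, then a shell-by-shell telescoping---is exactly the paper's. But the specific weight $w_i(j)$ you propose does not work, and the claimed equal-shell property fails.

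The relation $Bx=\rho x$ is \emph{linear} in $x$: it says $\sum_{w\sim v,\,w\neq u} x_{(v,w)}=\rho\,x_{(u,v)}$. Your norm computation, however, involves $w_i(j)^2=\prod_k x_{(u_k,u_{k+1})}^2$, and to get equal shell masses you would need $\sum_{w\sim v,\,w\neq u} x_{(v,w)}^2$ to equal $\rho$ (or at least $\rho$ times something depending only on the incoming edge). No such quadratic identity follows from the Perron relation. Already on a $d$-regular graph with $x$ constant, shell $0$ contributes $1$ while shell $1$ contributes $\rho^{-1}\sum_{j\sim i} x_{(i,j)}^2$, and these do not match. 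The paper avoids this by taking $(v_i)_j$ proportional to the \emph{square root of the transition probability} of a Perron-weighted non-backtracking walk; that probability telescopes to $\phi_{(\text{last edge})}/(\phi_i\rho^{s-1})$, so the weight depends only on the final edge, not on a product along the path, and each shell then sums to $1$ automatically because probabilities do.

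A second, smaller point: even with the correct weights, the normalized inner product across an edge is not \emph{equal} to $-\tfrac{2(1-1/m)\sqrt\rho}{\rho+1}$. The paper's computation gives $-(1-1/m)\tfrac{\sqrt\rho}{\rho+1}\cdot\tfrac{\phi_i+\phi_j}{\sqrt{\phi_i\phi_j}}$, and one needs AM--GM to conclude this is at most $-\tfrac{2(1-1/m)\sqrt\rho}{\rho+1}$. Inequality is all that is required for SDP feasibility, but your ``symmetrizing over $i$ and $j$'' does not produce the equality you assert unless $\phi_i=\phi_j$.
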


\noindent The feasible vector coloring we construct in the proof of Theorem \ref{thm:upper} assigns a $n$-dimensional unit vector $v_i$ to each vertex $i\in V$, whose coordinates we think of as again being indexed by $V$. In our construction, the coordinate $(v_i)_j$ is proportional to the \emph{square root of the probability of going from $i$ to $j$ in a certain non-backtracking random walk} of length equal to the distance between $i$ and $j$. This builds on the key idea in Srivastava and Trevisan's lower bound results for spectral sparsification \cite{srivastava2018alon}, and in the $d$-regular case recovers the result from Banks et al. \cite{banks2019lovasz}. Graphs drawn from $\ergraph{n}{d/n}$ have $\rho \approx d$, and this holds even if we condition on the constant probability event that the girth is any large constant of our choosing. Thus we can, with small albeit constant probability, construct $k$-vector colorings with $k$ arbitrarily close to $k = \tfrac{d+1}{2\sqrt{d}} + 1$. Finally, we adapt a well-known martingale technique developed in \cite{shamir1987sharp,luczak,achlioptas-moore-reg,banks2019lovasz} to guarantee, with high probability, a solution of similar cost.

The above construction can be used to prove two notable corollaries. First, it is also a near-optimal solution to the Goemans-Williamson relaxation of \textsc{MaxCut} in $\ergraph{n}{d/n}$ random graphs \cite{goemans1995improved}. Rounding with random hyperplanes yields a cut of cost 
$$
  |E| \cdot \left( \frac 12 + \frac {2-o_d(1) }{\pi} \cdot \frac 1 {\sqrt d} \right),
$$ 
which we believe is the strongest known algorithmically attainable lower bound to the maximum cut in $\ergraph{n}{d/n}$ random graphs (a tight bound is known, but the argument is not algorithmic \cite{dembo2017extremal}). In fact, this extends to any high-girth graph:

\begin{corollary} \label{cor:maxcut}
  If $\girth(G) \ge 2m + 1$,
  $$
    \maxcut(G) \ge |E|\left(\frac{1}{2} + \frac{2(1 - 1/m)\sqrt\rho}{\pi(\rho + 1)}\right).
  $$
\end{corollary}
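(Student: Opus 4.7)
The plan is to apply the Goemans--Williamson hyperplane rounding algorithm directly to the vector coloring produced by Theorem~\ref{thm:upper}. Under the girth hypothesis, that theorem yields unit vectors $\{v_i\}_{i\in V}$ satisfying
\[
\langle v_i, v_j\rangle \;\le\; -\frac{1}{k-1} \;=\; -\frac{2(1-1/m)\sqrt{\rho}}{\rho+1}
\]
for every edge $(i,j)\in E$, where $k = \tfrac{\rho+1}{2(1-1/m)\sqrt\rho}+1$. Crucially, this inner product is nonpositive on every edge, so the analysis below does not have to reckon with edges where the vectors are nearly aligned.

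Now I would sample a uniform random hyperplane through the origin and partition $V$ by which side each $v_i$ lies on; the standard calculation gives $\Pr[(i,j)\text{ is cut}] = \arccos(\langle v_i,v_j\rangle)/\pi$. The key elementary inequality I need is $\arccos(x) \ge \pi/2 - x$ for $x\in[-1,0]$, which holds because the difference $\arccos(x)-(\pi/2-x)$ vanishes at $x=0$, has derivative $1-1/\sqrt{1-x^2}\le 0$ there, and equals $\pi/2-1>0$ at $x=-1$. Applying this pointwise on edges gives
\[
\Pr[(i,j)\text{ is cut}] \;\ge\; \frac{1}{2} - \frac{\langle v_i,v_j\rangle}{\pi} \;\ge\; \frac{1}{2} + \frac{2(1-1/m)\sqrt{\rho}}{\pi(\rho+1)}.
\]

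Linearity of expectation then turns this into a lower bound of $|E|\bigl(\tfrac{1}{2}+\tfrac{2(1-1/m)\sqrt\rho}{\pi(\rho+1)}\bigr)$ on the expected size of the random cut, whence at least one cut achieves this value, proving the corollary. The proof has no real obstacle: essentially all of the content is packaged into Theorem~\ref{thm:upper}, and the only subtlety worth naming is that the nonpositivity of $\langle v_i,v_j\rangle$ on edges is exactly what permits the crude linear bound on $\arccos$ in place of the full Goemans--Williamson constant $\alpha_{\mathrm{GW}}\approx 0.878$, which would only worsen the multiplicative factor in front of the $1/\sqrt{\rho}$ correction we are after.
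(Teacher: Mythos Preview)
Your proposal is correct and follows essentially the same route as the paper: feed the vectors from Theorem~\ref{thm:upper} into Goemans--Williamson hyperplane rounding, use the linear bound $\arccos(x)\ge \pi/2 - x$ on the nonpositive edge inner products, and conclude by linearity of expectation. The paper's proof is identical in structure, only omitting your explicit justification of the $\arccos$ inequality.
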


Second, the vectors from Theorem \ref{thm:upper} can be used to prove a kind of generalized Alon-Boppana type theorem concerning the deformed Laplacian $L(z)$. The standard Alon-Boppana theorem \cite{nilli1991second} states that $d$-regular graphs with high diameter have have eigenvalues arbitrarily close to $2\sqrt{d-1}$; it has been refined and extended in numerous ways \cite[\S1.3-3]{davidoff2003elementary}\cite[\S3]{friedman1993some}\cite{nilli2004tight}, and our result generalizes the fact that regular graphs of large \emph{girth} have eigenvalues approaching $-2\sqrt{d-1}$. One can verify that these negative eigenvalues translate to eigenvalues of $L(z) = z^2\1 - zA + D - \1$ close to $(z + \sqrt{d-1})^2$ for every $z< 0$. For regular graphs $d-1$ is, among other things, the spectral radius of $B$, and we prove a direct generalization in this sense.

\begin{corollary} \label{cor:alon}
  If $G$ has girth at least $2m + 1$, then for every $z<0$,
  $$
    L(z) \not\succeq (z + \sqrt\rho)^2 - 2\sqrt\rho z/m.
  $$
\end{corollary}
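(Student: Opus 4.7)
The plan is to take a test vector inherited from the proof of Theorem~\ref{thm:upper} and compute its Rayleigh quotient against $L(z)$. Fix any vertex $u \in V$; the girth hypothesis ensures the ball of radius $m-1$ around $u$ is a tree. Letting $\xi$ denote the Perron eigenvector of $B$ and $p_t(u \to \cdot)$ the $\rho$-Doob-transformed non-backtracking walk measure used in that construction (with transitions $\xi_{(b,c)} / (\rho\,\xi_{(a,b)})$ and initial edge-distribution $\xi_{(u,\cdot)}/\mu_u$ for $\mu_u = \sum_{v\sim u}\xi_{(u,v)}$), I set
\[
    y_v \;=\; (-1)^{d(u,v)}\,\sqrt{p_{d(u,v)}(u \to v)} \qquad \text{for } d(u,v) \le m-1,
\]
and $y_v = 0$ otherwise. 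Up to normalization this is the vector $v_u$ of Theorem~\ref{thm:upper}.

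The computation expands $y^\top L(z) y = z^2 \|y\|^2 - z\,y^\top A y + y^\top (D - \1) y$ and evaluates each piece via the Perron identity $\rho\,\xi_{(a,b)} = \sum_{c \ne a,\, c \sim b}\xi_{(b,c)}$. Applied layer by layer, this identity gives $\sum_{v \in S_t(u)} p_t(u \to v) = 1$, so $\|y\|^2 = m$. The alternating sign pattern turns $y^\top A y$ into a telescoping sum evaluating to $-2(m-1)\sqrt\rho$ in the bulk, plus a strictly-negative boundary contribution at the root $u$ reflecting the discrepancy between $\mu_u$ and the bulk $\rho$-scaling. A parallel Cauchy--Schwarz estimate bounds the degree-weighted diagonal $y^\top(D - \1) y$ by $(m-1)\rho + O(1)$.

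Collecting,
\[
    \frac{y^\top L(z) y}{\|y\|^2} \;\le\; z^2 + 2z\sqrt\rho (1 - 1/m) + \rho - \delta(u,z)
    \;=\; (z + \sqrt\rho)^2 - \frac{2\sqrt\rho z}{m} - \delta(u,z),
\]
where $\delta(u,z) > 0$ for $z < 0$ captures the strictly-positive boundary slack. This exhibits the required witness: $y^\top L(z) y < \bigl((z + \sqrt\rho)^2 - 2\sqrt\rho z / m\bigr)\|y\|^2$, so $L(z) \not\succeq (z + \sqrt\rho)^2 - 2\sqrt\rho z / m$ for every $z < 0$.

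The main technical obstacle is the degree-weighted diagonal: whereas the edge sum telescopes cleanly once the Perron identity is invoked, $\sum_{v \in S_t}(d_v - 1)\xi_{(p(v),v)}$ does not equal $\rho^t \mu_u$ pointwise, and one must apply Cauchy--Schwarz to amortize $(d_v - 1)$ against the out-edge $\xi$-values layer by layer. This is precisely the mechanism already used to prove the $\chi_v$-upper bound in Theorem~\ref{thm:upper}, so the present corollary follows once that construction is spelled out; only the final arithmetic rearrangement is genuinely new.
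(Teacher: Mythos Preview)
Your single–test–vector approach has a genuine gap in the diagonal term. The claim that $y^\top(D-\1)y \le (m-1)\rho + O(1)$ is false for irregular graphs, and no Cauchy--Schwarz argument will repair it. Take $G$ whose $m$-neighborhood of $u$ is the $(a,b)$-biregular tree (degree $b$ at even layers, $a$ at odd layers). Your walk is then simple random walk on this tree, and one computes directly that $\sum_{v\in S_t}(d_v-1)\,p_t(u\to v)$ equals $a-1$ or $b-1$ depending on the parity of $t$. Hence
\[
    \frac{y^\top(D-\1)y}{\|y\|^2} \;\to\; \frac{(a-1)+(b-1)}{2} \;>\; \sqrt{(a-1)(b-1)} \;=\; \rho
\]
by AM--GM, with strict inequality whenever $a\neq b$. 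The $A$-term does pick up matching slack in the same direction---one gets $-y^\top Ay \approx (m-1)(\sqrt{a-1}+\sqrt{b-1}) \ge 2(m-1)\sqrt\rho$---but since this term is multiplied by $z$, the gain vanishes as $z\to 0^-$ while the loss in the $(D-\1)$ term does not. Concretely, with $a=2$, $b=10$, $\rho=3$, your Rayleigh quotient tends to $z^2+4z+5$, which exceeds the target $(z+\sqrt 3)^2$ for all $z\in(-2/(4-2\sqrt 3),0)$.

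The paper avoids this entirely by testing against a \emph{global} PSD matrix rather than a single vector: set $X_{i,j}=\sqrt{\phi_i\phi_j}\,\langle v_i,v_j\rangle$ using all of the $v_i$ from Theorem~\ref{thm:upper}, normalized so that $\Tr X=\sum_i\phi_i=1$. The diagonal term is then $\langle X,D-\1\rangle=\sum_i\phi_i(d_i-1)$, and the eigenvector identity $\phi_i=\phi_{i\to j}+\rho\,\phi_{j\to i}$ summed over all directed edges gives $\sum_i\phi_i d_i=(\rho+1)\sum_i\phi_i$, hence $\langle X,D-\1\rangle=\rho$ \emph{exactly}. Likewise $\langle X,A\rangle=-2(1-1/m)\sqrt\rho$ exactly, because the $\sqrt{\phi_i\phi_j}$ weighting cancels the AM--GM step in Theorem~\ref{thm:upper}. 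So $\langle X,L(z)\rangle=z^2+2(1-1/m)\sqrt\rho\,z+\rho$, which is precisely $(z+\sqrt\rho)^2-2\sqrt\rho z/m$. The $\phi$-weighting is what makes the degree term collapse; a vector localized at one vertex cannot access this identity.
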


We will prove Theorems \ref{thm:lower} and \ref{thm:upper} in \S\ref{sub:lower-pf}-\ref{sub:upper-pf} after first developing some preliminary results on non-backtracking walks in \S\ref{sub:preliminary_material}. Having done so, we prove Theorem \ref{thm:main} in \S\ref{sub:main-pf} and wrap up in \S\ref{sub:corollaries} with the two corollaries above.

\subsection{Optimality and Irregular Ramanujan Graphs} 
\label{sub:optimality_ramanujan_graphs_and_further_questions}

The best possible setting of $r$ in Theorem \ref{thm:lower} is $-\sqrt{d_{\avg} - 1}$, at which point we obtain the bound
$$
    \vcn(G) \ge \frac{d_{\avg}}{2\sqrt{d_{\avg} - 1}} + 1.
$$
In the case of $d$-regular Ramanujan graphs---those for which the nontrivial eigenvalues of the adjacency matrix have magnitude at most $2\sqrt{d-1}$---this matches the standard spectral bound on the chromatic number. For regular graphs, the Ramanujan property is euqivalent to every nontrivial eigenvalue of $B$ having magnitude at most $\sqrt{d-1}$; since $\rho = d-1$ in the regular case, some authors to define an irregular graph as Ramanujan if its nontrivial non-backtracking eigenvalues have modulus at most $\sqrt\rho$ \cite{bordenave-lelarge-massoulie,lubotzky1995cayley}. If a graph is Ramanujan in this sense, we can take $r = -\sqrt\rho$, giving
$$
  \vcn(G) \ge \frac{d_{\avg}\sqrt\rho}{\rho + d_{\avg} - 1} + 1;
$$
this could only match our upper bound in the case $\rho = d_{\avg} - 1$, which is true for regular graphs, approximately true for \ER random graphs, and fails generically.

\begin{question}
  What ``Ramanujan'' assumption on the spectrum of $B$ implies the converse of Theorem \ref{thm:upper}? Is it enjoyed, either approximately or exactly, by random graphs?
\end{question}



\section{Proofs} 
\label{sec:proofs}

\subsection{Notation and Non-backtracking Preliminaries} 
\label{sub:preliminary_material}

We will write $\Spec X$ for the unordered set of eigenvalues of a matrix $X$, $\spr X$ for the modulus of its largest eigenvalue, and use the standard notation $X \succeq 0$ to indicate that a (Hermitian) matrix is positive semidefinite, or in other words that $\Spec X \subset \bbR_{\ge 0}$. For two matrices $X$ and $Y$, $X\odot Y$ will denote the entry-wise product and  $\langle X,Y \rangle = \Tr YX^\ast = \sum_{i,j} \overline{X_{i,j}}Y_{i,j}$ the Frobenius inner product. It is a standard lemma that $X,Y \succeq 0$ implies $X\odot Y \succeq 0$ as well, and that $\langle X,Y\rangle \ge 0$. The set of integers $\{1,...,k\}$ will be denoted by $[k]$.

To an unweighted, undirected, and connected graph $G = (V,E)$ on $n$ vertices, we will associate an \emph{adjacency matrix} $A$, \emph{diagonal degree matrix} $D$, and shortest path distance metric $\dist : V \times V \to \bbN$. Although $G$ is undirected, it will be useful to think of each edge $(i,j) \in E$ as a pair of directed edges $i\to j$ and $j\to i$; we'll call the set of these directed edges $\dir$. For each vertex $i$, write $\partial i$ for the set of neighbors of $i$. The central object in our proofs will be the \emph{non-backtracking matrix} associated to $G$; this is a linear operator on $\bbC^{2m}$, which we will think of as the vector space of functions $\dir \to \bbC$. Indexing the standard basis of $\bbC^{2m}$ by the elements of $\dir$,
\[
	B_{i\to j, k\to \ell} = 1 \qquad \text{ if $j = k$ and $i \neq \ell$},
\]
and zero otherwise. True to its name, the powers of $B$ encode walks on $G$ which are forbidden from returning along the same edge that they have just traversed. 

The reader may verify that $B$ is a non-normal operator, and therefore its spectrum is in general a complicated subset of the complex plane. Since its entries are nonnegative, however, we can apply the Perron-Frobenius theorem after carefully analyzing the reducibility and periodicity of $B$. The following result, collating \cite[Corollary 11.12]{terras2010zeta} and \cite[Proposition 3.1]{kotani2000zeta}, characterizes these attributes.
 
\begin{proposition}[Terras, Kotani, Sudana] \label{prop:perron}
  Let $G$ be connected. The spectrum of $B$ depends only the $2$-core of $G$, and once we restrict to this core, $B$ is reducible if and only if $G$ is a cycle. Finally, $B$ has even period if and only if $G$ is bipartite, and odd period $p$ if and only if $G$ is a \emph{subdivision}, e.g. if it is obtained by replacing in a smaller graph $H$ every edge with a path of length $p$.
\end{proposition}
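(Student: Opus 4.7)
The plan is to address the three claims in sequence, each by passing through the combinatorial structure of the non-backtracking transition graph on $\dir$ (whose arcs are $e \to f$ with $B_{e,f} = 1$) and then invoking Perron--Frobenius for nonnegative matrices.

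For the reduction to the $2$-core, I would partition $\dir$ into edges whose two endpoints both lie in the $2$-core versus those at least one of whose endpoints lies in a pendant tree hanging off the core. After ordering the tree edges by distance to the nearest leaf, the sub-block of $B$ on outward-pointing tree edges is strictly triangular, since such a walk has a unique continuation at each step and terminates at a leaf; inward-pointing tree edges feed the walk into the core after finitely many steps and contribute no closed orbits outside the core. The result is a block triangular form
\begin{equation*}
    B = \begin{pmatrix} B_{\text{core}} & \ast \\ 0 & N \end{pmatrix}
\end{equation*}
with $N$ nilpotent, so $\Spec B$ agrees with $\Spec B_{\text{core}}$ up to extra zeros.

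For irreducibility, assume $G$ equals its own $2$-core. Reducibility of $B$ is equivalent to the transition graph on $\dir$ failing to be strongly connected. If $G$ is a cycle then each directed edge has a unique forward continuation, so this transition graph is a disjoint union of two directed cycles (the two circular orientations), and $B$ is reducible. If $G$ is not a cycle, then since every vertex has degree at least $2$, some vertex $v^\star$ has degree at least $3$; I would show strong connectivity by constructing, for any two directed edges $e, f \in \dir$, a non-backtracking walk from $e$ to $f$ as follows. Route from the head of $e$ toward $f$ using connectedness, and whenever a direction reversal is required, insert a short detour through $v^\star$, whose extra neighbors provide a legal non-backtracking turnaround. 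I expect this construction to be the main obstacle: one must verify that such detours can always be inserted regardless of how $e$ and $f$ sit relative to $v^\star$ and the shortest paths between them, in particular when $f$ lies along the path already traversed from $e$.

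Finally, for the period, the standard fact is that the period of an irreducible nonnegative matrix equals the $\gcd$ of the lengths of closed orbits of its transition graph, i.e.\ of closed non-backtracking walks in $G$. Bipartiteness forces every closed walk in $G$ to have even length, and hence the period to be even; conversely, even period means no odd closed walk exists, which yields a bipartition of $V$. For odd period $p$, if $G$ is a $p$-fold subdivision of some $H$ then every closed non-backtracking walk in $G$ descends to one in $H$ with length scaled by $p$, so $p$ divides the period. For the converse, if the period is an odd $p > 1$, the transition graph partitions into $p$ cyclic classes; pulling this structure back to $G$ shows that vertices of degree different from $2$ can only sit in a common position class, and contracting the length-$p$ paths between consecutive such vertices recovers the graph $H$ of which $G$ is a subdivision. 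Irreducibility from the previous step ensures the labeling is well defined. With the three pieces in hand, Perron--Frobenius then delivers the simple real eigenvalue at $\spr(B)$ used throughout the rest of the paper.
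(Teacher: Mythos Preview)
The paper does not prove this proposition at all: it is stated with attribution and the sentence preceding it points to \cite[Corollary 11.12]{terras2010zeta} and \cite[Proposition 3.1]{kotani2000zeta} for the proofs. So your proposal is not competing with any argument in the paper; you are supplying what the authors deliberately outsourced.

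That said, your outline is the standard one and is essentially sound, with two places that would need tightening before it is a proof. First, in the $2$-core reduction, the two-block triangular form you wrote down is not quite what your edge partition produces: inward-pointing tree edges can transition to outward-pointing ones at internal branch points of a pendant tree, so the off-diagonal structure is messier than a single $\ast$ block. The clean way is to order directed edges by the strongly connected components of the transition digraph; every directed edge touching a vertex outside the $2$-core lies in a trivial SCC and contributes a $1\times 1$ zero diagonal block, yielding $\Spec B = \Spec B_{\text{core}} \cup \{0,\ldots,0\}$ directly. Second, in the subdivision converse, the cyclic classes from Perron--Frobenius live on $\dir$, not on $V$, and you have to argue that the two orientations of each undirected edge receive labels that are consistent enough to push down to a vertex (or undirected-edge) labeling from which $H$ can be read off. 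This is exactly the step Kotani and Sunada spend effort on, and your sketch (``vertices of degree different from $2$ can only sit in a common position class'') asserts the conclusion without the bridging argument. The irreducibility and bipartite-period pieces are fine as sketched.
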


\noindent From the perspective of coloring, bipartite graphs and subdivisions are are uninteresting, and vertices outside the $2$-core cannot impact the chromatic number, so let us assume from this point that $G$ is non-bipartite and non-subdivided, with minimum degree two. 

In this case, the Perron-Frobenius theorem tells us that $\spr B \triangleq \rho \in \Spec B$, and that the corresponding left and right eigenvectors have positive entries; this positivity will be important, and is the reason we stated Proposition \ref{prop:perron} in such detail. 

An invaluable tool for further analyzing the spectral properties of $B$ is a classic result relating its characteristic polynomial to the determinant of a quadratic matrix-valued function involving $A$ and $D$ and due in various forms to Ihara, Bass, and Hashimoto; see \cite{kotani2000zeta,angel2015non,bass1992ihara,hashimoto1989zeta}, to name just a few.

\begin{theorem}[Ihara, Bass, Hashimoto] \label{thm:ihara}
	For any graph $G$,
	\begin{equation*}
		\det(z\1 - B) = (z^2 - 1)^{|E| - |V|}\det(z^2\1 - zA + D - \1).
	\end{equation*}
\end{theorem}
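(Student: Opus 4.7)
The plan is to prove the Ihara–Bass identity using Bass's block-matrix determinant trick. First, introduce three auxiliary matrices that encode the combinatorics of $\dir$: the tail-incidence matrix $S \in \{0,1\}^{\dir \times V}$ with $S_{e,v} = \1[v = \mathrm{tail}(e)]$, the head-incidence matrix $T \in \{0,1\}^{\dir \times V}$ with $T_{e,v} = \1[v = \mathrm{head}(e)]$, and the edge-reversal permutation $J$ on $\bbC^{\dir}$. Unwinding definitions yields the algebraic relations
\begin{equation*}
    S^\top T = A, \qquad S^\top S = T^\top T = D, \qquad T S^\top = B + J, \qquad JT = S, \qquad J^2 = \1,
\end{equation*}
where $\1$ denotes the identity of the appropriate size. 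The relation $T S^\top = B + J$ is the crucial one: it encodes non-backtracking, since $(T S^\top)_{e,f} = 1$ precisely when the head of $e$ equals the tail of $f$, and the only such pairs that are \emph{backtracking} satisfy $f = \iota(e)$.

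Next, substitute $u = 1/z$ so the target becomes $\det(\1 - u B) = (1 - u^2)^{|E| - |V|}\,\det\bigl((1 - u^2)\1_n - u A + u^2 D\bigr)$, and consider the $(2m + n) \times (2m + n)$ block matrix
\begin{equation*}
    \mathcal{M}(u) = \begin{pmatrix} \1_{2m} + u J & u T \\ S^\top & \1_n \end{pmatrix}.
\end{equation*}
Compute $\det \mathcal{M}(u)$ two ways via the Schur complement formula. Eliminating the bottom-right block $\1_n$ yields
\begin{equation*}
    \det \mathcal{M}(u) = \det\bigl(\1_{2m} + u J - u T S^\top\bigr) = \det(\1_{2m} - u B),
\end{equation*}
because $T S^\top = B + J$ makes the $u J$ contributions cancel exactly. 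Eliminating the top-left block uses the involution identity $(\1_{2m} + u J)^{-1} = (\1_{2m} - u J)/(1 - u^2)$, reducing the relevant cross term to $S^\top(\1_{2m} - uJ)(uT) = uA - u^2 D$ via $S^\top T = A$ and $S^\top J T = S^\top S = D$. Combined with $\det(\1_{2m} + u J) = (1 - u^2)^m$ (since $J$ is an involution with $m$ eigenvalues $+1$ and $m$ eigenvalues $-1$), this gives
\begin{equation*}
    \det \mathcal{M}(u) = (1 - u^2)^{m - n} \det\bigl((1 - u^2)\1_n - u A + u^2 D\bigr).
\end{equation*}

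Equating the two expressions, substituting $u = 1/z$, rescaling via $\det(z\1 - B) = z^{2m}\det(\1 - uB)$ and $\det(z^2\1 - zA + D - \1) = z^{2n}\det((1 - u^2)\1_n - uA + u^2 D)$, and using $|E| - |V| = m - n$, produces the claimed identity for $z \notin \{0, \pm 1\}$; polynomial continuity in $z$ extends it to all of $\bbC$. The main creative obstacle is simply the \emph{choice} of $\mathcal{M}(u)$: once it is in hand, all the manipulations are routine applications of the five identities above. The matrix must be engineered so that one Schur complement cleanly produces $\det(\1 - uB)$ through $B = TS^\top - J$, while the other produces the deformed Laplacian through $S^\top T = A$ and $S^\top J T = D$, with the involution $J^2 = \1$ supplying exactly the ``topological'' factor $(1 - u^2)^{m - n}$.
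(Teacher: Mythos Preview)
Your proof is correct; it is precisely Bass's block-matrix argument, and every step checks out. In particular, the identities $S^\top T = A$, $S^\top S = T^\top T = D$, $TS^\top = B + J$, $JT = S$, and $J^2 = \1$ are all easily verified from the definitions, the two Schur complement evaluations of $\det\mathcal{M}(u)$ are computed correctly, and the homogenization $u = 1/z$ and degree count are clean.

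There is nothing to compare to in the paper itself: Theorem~\ref{thm:ihara} is stated there as a classical result and attributed to Ihara, Bass, and Hashimoto with citations, but no proof is given. Your write-up is exactly the argument one finds in Bass's original paper and in most modern expositions. One small cosmetic point: you use $m$ for $|E|$, whereas elsewhere in the paper $m$ denotes the girth parameter; if this proof is to be inserted into the paper, you should rename that variable to avoid a clash.
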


We will refer to the matrix-valued quadratic
\[
	L(z) \triangleq z^2\1 - zA + D - \1	
\]
as the \emph{deformed Laplacian}; note that when evaluated at $z = \pm 1$ it returns the standard and `signless' Laplacians $D \pm A$. The former is always singular, and the latter if and only if $G$ is bipartite, so given our assumptions $B$ has an eigenvalue at $+1$ with multiplicity $|E| - |V| + 1$, and one at $-1$ with multiplicity $|E| - |V|$. The remaining eigenvalues correspond to $z \in \bbC$ for which $L(z)$ is singular. The key lemma for Theorem 2 relates the spectrum of $B$ to the semidefiniteness of $L(z)$ for negative $z$; we first encountered it in \cite[p13]{fan2017well}.

\begin{lemma} \label{lem:L-pos}
	For any lower bound $r \in \bbR$ on the smallest real eigenvalue of $B$, $L(r) \succeq 0$.
\end{lemma}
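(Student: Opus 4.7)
The plan is to leverage the Ihara-Bass identity (Theorem~\ref{thm:ihara}) together with a continuity-of-spectrum argument on the one-parameter family of symmetric matrices $\{L(z)\}_{z \in \bbR}$. Since $L(z)$ is symmetric with entries polynomial in $z$, its eigenvalues depend continuously on the real parameter $z$, and they can change sign only where $L(z)$ is singular. Moreover, for $|z|$ sufficiently large the $z^2\mathbf{1}$ summand dominates---for instance, $\lambda_{\min}(L(z)) \ge z^2 - \|A\|\,|z| - \max_i (d_i - 1)$---so $L(z) \succ 0$ for $z$ very negative, giving the starting point of a deformation in $z$.

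Next I would use Ihara-Bass to locate the singular points of $L$ on the real line: for any $z \in \bbR \setminus \{\pm 1\}$ the factor $(z^2 - 1)^{|E| - |V|}$ is nonzero, so $\det L(z) = 0$ if and only if $z \in \Spec B \cap \bbR$. The key observation is that the standing assumptions ($G$ connected, non-bipartite, non-subdivided, and of minimum degree two) force $|E| > |V|$, so Proposition~\ref{prop:perron} together with Theorem~\ref{thm:ihara} guarantees $-1 \in \Spec B$. Hence the smallest real eigenvalue of $B$ is at most $-1$, and any lower bound $r$ on it satisfies $r \le -1$. Consequently, the open interval $(-\infty, r)$ contains neither $\pm 1$ (since $r \le -1 < 1$) nor any real eigenvalue of $B$ (since every such eigenvalue is $\ge r$), so $L(z)$ is nonsingular throughout $(-\infty, r)$.

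Combining these ingredients, the continuous family $z \mapsto L(z)$ is positive definite as $z \to -\infty$ and never becomes singular on $(-\infty, r)$, so continuity of eigenvalues forces $L(z) \succ 0$ on the entire open interval. Taking $z \to r^-$ then yields $L(r) \succeq 0$. The one subtle point I would flag is the verification that $r \le -1$ under the standing assumptions; without it, the case $z = -1$ would lie in the interior of the interval, and one would need to fall back on the separate (but elementary) fact that the signless Laplacian $L(-1) = D + A$ is always PSD.
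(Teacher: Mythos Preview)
Your proposal is correct and follows essentially the same approach as the paper: both start from $L(z)\succ 0$ for $z\ll 0$, invoke continuity of eigenvalues, and use Ihara--Bass to rule out singularities of $L$ on the interval up to the smallest real eigenvalue of $B$. Your write-up is in fact more careful than the paper's in explicitly disposing of the potential exceptional point $z=-1$ (via $-1\in\Spec B$ under the standing assumptions, forcing $r\le -1$), which the paper leaves implicit.
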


\begin{proof}
	For $r \in \bbR$, the matrices $L(r)$ are symmetric with real spectrum. When $r \ll 0$, $L(r) \succeq 0$ by a simple diagonal dominance argument. It is a standard result that the eigenvalues of a matrix are continuous functions in its entries, so as we increase $r$, the only way $L(r)$ can fail to be PSD is for one of its eigenvalues to cross zero. However, by Theorem \ref{thm:ihara} $L(r)$ cannot be singular for any real $r$ smaller than the smallest real eigenvalue of $B$.
\end{proof}


\subsection{Theorem \ref{thm:lower}: The Ihara-Bass Identity and Deformed Laplacian} 
\label{sub:lower-pf}

Let $P \succeq 0$ be any positive semidefinite matrix. Writing $r_\ast$ for the smallest real eigenvalue of $B$, Lemma \ref{lem:L-pos} implies
\begin{equation} \label{eq:innerprod}
  0 \le \langle P, L(r)\rangle = r^2\Tr P - r\langle P, A \rangle + \langle P,D - \1\rangle.
\end{equation}
for every $r \le r_\ast$. One can check that, subject to the constraint $r\le r_\ast$, this function is minimized at the smaller of $r_\ast$ and $-\sqrt{\langle X, D - \1\rangle}$. As an aside, we've shown:

\begin{lemma} \label{lem:real-ram}
  If $G$ is non-bipartite, and $B$ has no real eigenvalues other than $\pm 1$ and $\rho$, then for any $P \succeq 0$,
  \[
    \langle A,P \rangle \ge -2\Tr P \sqrt{\langle D-\1, P \rangle},
  \]
\end{lemma}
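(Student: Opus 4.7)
The plan is to optimize the parameter $r$ in the deformed-Laplacian inequality $\langle P, L(r)\rangle \ge 0$ from equation~(\ref{eq:innerprod}) and Lemma~\ref{lem:L-pos}, using the Ramanujan-type hypothesis to enlarge the admissible range of $r$ beyond the trivial interval $(-\infty, -1]$ and then completing the square.

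First I would show that $L(r) \succeq 0$ for every $r \le 1$, strengthening Lemma~\ref{lem:L-pos}. Three ingredients enter: (i) since $G$ is non-bipartite, the signless Laplacian $L(-1) = D + A$ is strictly positive definite; (ii) by the Ihara-Bass identity (Theorem~\ref{thm:ihara}), together with the hypothesis that the only real eigenvalues of $B$ are $\pm 1$ and $\rho$, the polynomial $\det L(r)$ has no real roots in the interval $(-1, 1)$; (iii) the eigenvalues of $L(r)$ vary continuously with $r$, so the positive definiteness at $r = -1$ is preserved on all of $(-\infty, 1]$, as no eigenvalue can cross zero without $L(r)$ becoming singular.

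Given this, for any $P \succeq 0$ the convex quadratic
\[
    f(r) \triangleq \langle P, L(r)\rangle = r^2 \Tr P - r \langle A, P\rangle + \langle D - \1, P\rangle
\]
is nonnegative on $(-\infty, 1]$. I would then evaluate $f$ at its unconstrained minimizer $r^\ast = \langle A, P\rangle/(2 \Tr P)$. In the only case that requires work, $\langle A, P\rangle \le 0$, this minimizer satisfies $r^\ast \le 0 \le 1$ and is therefore admissible; the inequality $f(r^\ast) \ge 0$ reduces to $\langle A, P\rangle^2 \le 4\, \Tr P \cdot \langle D - \1, P\rangle$. Taking square roots and accounting for the sign of $\langle A, P\rangle$ then yields the stated bound. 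When $\langle A, P\rangle > 0$ the claim is vacuous.

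The main obstacle I anticipate is the PSD extension in the first step: one must carefully combine the Ihara-Bass identity, to rule out real singular points of $\det L(r)$ inside $(-1, 1)$, with non-bipartiteness, to pin down $L(-1)$ as strictly positive definite rather than merely semidefinite. The subsequent single-variable optimization is routine.
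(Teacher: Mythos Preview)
Your proposal is correct and essentially matches the paper's argument: both exploit $\langle P, L(r)\rangle \ge 0$ from Lemma~\ref{lem:L-pos} and then optimize the resulting quadratic in $r$. The only difference is that you first enlarge the admissible range of $r$ from $(-\infty,-1]$ to $(-\infty,1]$ via Ihara--Bass and non-bipartiteness, whereas the paper simply works on $(-\infty,-1]$ and observes that its optimizer already lands there; your extension is sound but not needed.
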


\noindent In the $d$-regular case, Theorem \ref{thm:ihara} implies that a non-bipartite $G$ graph is Ramanujan if and only if $B$ has no real eigenvalues besides $\pm 1$ and $\rho = d-1$, and that this condition implies $\langle A, P \rangle \ge -2\Tr P\sqrt{d-1}$. Thus Lemma \ref{lem:real-ram} suggests that this condition on the spectrum of $B$ may be a natural notion of the Ramanujan property for irregular graphs.

The proof of Theorem \ref{thm:lower} will follow from a stronger result:
\begin{align}
    \chi_v(G) 
    \ge \max_{r < r_\ast} \max_W \frac{-r\langle W, A \rangle}{r^2 + \langle W, D - \1 \rangle} + 1 \qquad \textup{s.t.} \qquad W &\succeq 0 \\
    \Tr W &= 1 \nonumber \\
    W_{i,j} &\ge 0 \text{ for all $(i,j) \in E$} \nonumber
\end{align}
Let $W$ satisfy the three conditions above, and assume that $X \succeq 0$ is the Gram matrix witnessing $\vcn(G) = \kappa$, so that $X$ has ones on its diagonal and $X_{i,j} \ge -(\kappa - 1)^{-1}$ if $(i,j) \in E$. We can set $P = X \odot W$ in \eqref{eq:innerprod}, so that
\[
  0 \le \langle X \odot W, L(r) \rangle =  r^2 + \frac{r}{\kappa - 1}\langle W,A \rangle + \langle W,D - \1\rangle.
\]
To prove Theorem \ref{thm:lower}, set $W_{i,j} = 1/n$.

\begin{remark}
  It is a priori possible that, by carefully tuning $W$, this result could be improved to meet the high-girth limit of the upper bounds in Theorem \ref{thm:upper}. We have observed numerically, however, that this is not the case.
\end{remark}


\subsection{Theorem \ref{thm:upper}: A Non-backtracking Random Walk} 
\label{sub:upper-pf}

To prove Theorem \ref{thm:upper}, we need to produce unit vectors $\vec v_i$ for every $i \in V$, so that the maximum of $\innerprod{\vec v_i}{\vec v_j}$ over all $(i,j) \in E$ is as negative as possible. Assume that $\girth(G) \ge 2m + 1$, so that in particular if any vertices are at distance less than $m$, they are connected by a unique non-backtracking (and, indeed, self-avoiding) walk of length $\dist(i,j)$. Borrowing an insight of \cite{srivastava2018alon}, we will construct these vectors from non-backtracking random walk on the vertices of $G$. By this we mean a random walk which, started at some vertex $i$, chooses on its first step one of the neighbors of $i$, and on subsequent steps makes only non-backtracking moves. Write $X_s$ for the random variable encoding the position of the walk at time $s$, and $\bbP_i$ for its distribution upon starting the walk at vertex $i$. We will remain for the moment agnostic as to the actual transition probabilities, so that it is clear which portions of the argument depend on them, and which do not.

The $v_i$ will be built as follows: set each to have one coordinate for each $j \in V$, with
\[
	(\vec v_i)_j = \frac{1}{\sqrt{m}}(-1)^{\dist(i,j)}\sqrt{\bbP_i[X_{\dist(i,j)} = j]} \qquad \text{if $1 \le \dist(i,j) \le m$, and zero otherwise.}
\]
We've arranged things so that
\[
	\|\vec v_i\|^2 = \frac{1}{m}\sum_{s\in [m]} \sum_{j : \dist(i,j) = s} \bbP_i[X_s = j] = 1,
\]
since after $s$ steps the walk has probability one of reaching \emph{some} vertex at distance $s$ from its starting point. 

It remains to study the inner products between pairs of vectors at neighboring vertices. For any $(i,j) \in E$, the inner product depends only on vertices at distance less than $m$ from both $i$ and $j$. Because of our initial girth assumption, the depth-$m$ neighborhoods of $i$ and $j$ together form a tree in which every vertex $\ell$ satisifies $|\dist(i,\ell) - \dist(j,\ell)| = 1$, and we can divide this into a portion $L$ of vertices closer to $i$ than $j$, and its counterpart $R$ closer to $j$ than $i$. Let us further segment $L$ into layers $\{i\} = L_0,L_1,...,L_{m-1}$ according to distance from $i$, and similarly for $R$. 

\begin{figure}[h]
\centering
\begin{tikzpicture}
  \tikzstyle{vertex}=[circle,fill=black!25,minimum size=17pt,inner sep=0pt]

  \node[vertex] (i) at (0,0) {$i$};
  \node[vertex] (j) at (1,0) {$j$};
  \node[vertex] (l11) at (-1,.5) {};
  \node[vertex] (l12) at (-1,-.5) {};
  \node[vertex] (l21) at (-2,1) {};
  \node[vertex] (l22) at (-2,0) {};
  \node[vertex] (l23) at (-2,-1) {};  
  \node[vertex] (r11) at (2,.5) {};
  \node[vertex] (r12) at (2,-.5) {};
  \node[vertex] (r21) at (3,1) {};
  \node[vertex] (r22) at (3,-1) {};

  \draw (i) -- (j);
  \draw (l11) -- (i);
  \draw (l12) -- (i);
  \draw (l21) -- (l11);
  \draw (l22) -- (l12);
  \draw (l23) -- (l12);
  \draw (r11) -- (j);
  \draw (r12) -- (j);
  \draw (r21) -- (r11);
  \draw (r22) -- (r12);

  \begin{pgfonlayer}{background}

    \filldraw[red!40!white] [smooth cycle] plot[tension=.2] coordinates {(-1.6,1.5) (-2.4,1.5) (-2.4,-1.5)  (-1.6,-1.5)};

    \node (l2) at (-2,-2) {$L_2$};

    \filldraw[red!40!white] [smooth cycle] plot[tension=.2] coordinates {(-.6,1.5) (-1.4,1.5) (-1.4,-1.5)  (-.6,-1.5)};

    \node (l1) at (-1,-2) {$L_1$};

    \filldraw[red!40!white] [smooth cycle] plot[tension=.2] coordinates {(.4,1.5) (-.4,1.5) (-.4,-1.5)  (.4,-1.5)};

    \node (l0) at (0,-2) {$L_0$};

    \filldraw[blue!40!white] [smooth cycle] plot[tension=.2] coordinates {(1.4,1.5) (.6,1.5) (.6,-1.5)  (1.4,-1.5)};

    \node (r0) at (1,-2) {$R_0$};

    \filldraw[blue!40!white] [smooth cycle] plot[tension=.2] coordinates {(1.6,1.5) (2.4,1.5) (2.4,-1.5)  (1.6,-1.5)};

    \node (r1) at (2,-2) {$R_1$};

    \filldraw[blue!40!white] [smooth cycle] plot[tension=.2] coordinates {(2.6,1.5) (3.4,1.5) (3.4,-1.5)  (2.6,-1.5)};

    \node (r2) at (3,-2) {$R_2$};

  \end{pgfonlayer}
\end{tikzpicture}
\end{figure}

\noindent Then, directly computing,

\begin{align*}
	\innerprod{\vec v_i}{\vec v_j}
	&= \frac{1}{m}\sum_{\ell:\dist(i,\ell),\dist(j,\ell) \in [m]} (-1)^{\dist(i,\ell) + \dist(j,\ell)} \sqrt{\bbP_i[X_{\dist(i,\ell)} = \ell]\bbP_j[X_{\dist(j,\ell)} = j]} \\
	&= \frac{-1}{m}\sum_{s\in[m-1]}\left(\sum_{\ell \in L_s}\sqrt{\bbP_i[X_s = \ell]\bbP_j[X_{s+1} = \ell]} + \sum_{\ell \in R_s} \sqrt{\bbP_i[X_{s+1} = \ell]\bbP_j[X_s = \ell]}\right).
\end{align*}

The non-backtracking structure of the random walk, and the local tree-like configuration nearby $i$ and $j$, allow us to simplify this expression further. When $s\ge 1$ Bayes rule implies
$$
  \bbP_i[X_{s+1} = \ell] = \bbP_j[X_{s+1} = \ell\mid X_1 = i]\bbP[X_1 = i].
$$
Now, by non-backtracking, the probability of reaching $\ell$ in $s+1$ steps starting from $j$, conditional on reaching $i$ on the first step, is the same as the probability of reaching $\ell$ in $s$ steps starting at $i$, conditional on the first step not hitting $j$. We can use Bayes again to write
$$
  \bbP_i[x_s = \ell \mid X_1 \neq j] = \frac{\bbP_i[X_s = \ell, X_1 \neq j]}{\bbP_i[X_1 \neq j]}.
$$
Finally, again by non-backtracking, the information that $X_1 \neq j$ is redundant once we know that it starts at $i$ and reaches $s$ in $\ell$ steps, so $\bbP_i[X_s = \ell, X_1 \neq j] = \bbP_i[X_s = \ell]$. Putting together these steps gives us

\begin{align*}
	\bbP_j[X_{s+1} = \ell]
  = \frac{\bbP_i[X_s = \ell]}{\bbP_i[X_1 \neq j]}\bbP_j[X_1 = i],
\end{align*}

\noindent and thus

\begin{align*}
  \innerprod{\vec v_i}{\vec v_j} 
  &= -\frac{1}{m}\left(\sum_{s\in[m-1]}\left(\sum_{\ell \in L_s} \sqrt{\frac{\bbP_j[X_1 = i]}{\bbP_i[X_1\neq j]}}\bbP_i[X_s = \ell] + \sum_{\ell \in R_s}\sqrt{\frac{\bbP_i[X_1 = j]}{\bbP_j[X_1\neq i]}}\bbP_j[X_s = \ell]\right)\right) \\
  &= -(1-1/m)\left(\sqrt{\bbP_j[X_1 = i]\bbP_i[X_1 \neq j]} + \sqrt{\bbP_i[X_1 = j]\bbP_j[X_1 \neq i]}\right).
\end{align*}

We now choose the transition probabilities for our random walk, having simplified the dependence on them of the inner products we are interested in. Recall from the Perron-Frobenius theorem that, under our assumptions on $G$ (simple, minimum degree $2$, non-subdivided), $\rho$ is a simple eigenvalue of $B$, and that its corresponding left and right eigenvectors have strictly positive entries. Let's denote the right eigenvector by $\vec \phi$, and record explicitly that

\begin{equation} \label{eq:eig}
  \sum_{k \in  \partial j \setminus i} \phi_{j\to k} = \rho \phi_{i \to j} \qquad \forall i\to j \in \dir.
\end{equation}

\noindent It will be useful to overload notation and define $\phi_i \triangleq \sum_{j \in \partial i} \phi_{i \to j}$, observing that \ref{eq:eig} implies $\phi_i = \phi_{i \to j} + \rho \phi_{j \to i}$ for every $j \in \partial i$. 

We will set the transition probabilities of our random walk proportional to the coordinates of $\vec\phi$. In other words, 
\[
  \bbP_i[X_1 = j] = \frac{\phi_{i\to j}}{\phi_i} \qquad \text{if $i\to j \in \dir$}
\]
and
\[
  \bbP_i\left[X_s = \ell \mid X_{s-1} = k, X_{s-2} = j \right] = \frac{\phi_{k\to \ell}}{\rho\phi_{j\to k}} \qquad \text{if $s > 1$ and $j\to k \to \ell$ is non-backtracking}.
\]
Normalization follows immediately from the fact that $\phi$ is a right eigenvector. Returning to the inner product between $v_i$ and $v_j$,

\begin{align*}
	\innerprod{\vec v_i}{\vec v_j} 
	&= -(1-1/m)\left(\sqrt{\bbP_j[X_1 = i]\bbP_i[X_1 \neq j]} + \sqrt{\bbP_i[X_1 = j]\bbP_j[X_1 \neq i]}\right) \\
	&= -(1-1/m)\frac{\sqrt\rho(\phi_{i\to j} + \phi_{j\to i})}{\sqrt{\phi_i \phi_j}} \\ 
	&= -(1-1/m)\frac{\sqrt\rho}{\rho + 1}\frac{\phi_i + \phi_j}{\sqrt{\phi_i\phi_j}} & &\text{from \eqref{eq:eig} and discussion}\\
	&\le -(1-1/m)\frac{2\sqrt\rho}{\rho + 1},
\end{align*}
with the final line following (for instance) the inequality of arithmetic and geometric means.


\subsection{Theorem \ref{thm:main}} 
\label{sub:main-pf}

We are now prepared to study the vector chromatic number of $G \sim \calG(n,d/n)$. To prove Theorem \ref{thm:main}, we first need to supply a lower bound on $\vcn(G)$---this will follow immediately from Theorem 2, and an established result on the spectrom of $B$ in the \ER case \cite[Theorem 3]{bordenave-lelarge-massoulie}:

\begin{theorem}(Bordenave, Lelarge, and Massoulie)
  When $G \sim \calG(n,d/n)$, with probability $1 - o_n(1)$, the spectrum of $B$ consists of a Perron eigenvalue at $d \pm o_n(1)$, and remaining eigenvalues of magnitude at most $\sqrt{d} + o_n(1)$.
\end{theorem}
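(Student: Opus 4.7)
The statement is the main theorem of Bordenave--Lelarge--Massouli\'e, and any proof goes through their trace-method analysis of the non-backtracking matrix $B$ on a locally tree-like random graph. The plan splits naturally into two independent pieces: (i) locating the Perron eigenvalue near $d$, and (ii) confining all other eigenvalues to the disk of radius $\sqrt{d} + o_n(1)$.

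For (i), I would use local weak convergence: a radius-$r$ neighborhood of a uniformly random directed edge in $G\sim \calG(n,d/n)$ converges in distribution to the root-neighborhood of a Galton--Watson tree with $\Pois(d)$ offspring. On such a tree $B$ acts as the ``children'' operator on directed edges, whose mean multiplier is exactly $d$, so the appropriate spectral radius at the limit is $d$. To transfer this back to the finite graph I would produce an explicit approximate Perron eigenvector --- essentially the indicator of directed edges lying in the giant $2$-core --- and verify that its Rayleigh quotient under $B$ is $d + o_n(1)$. Combined with Proposition \ref{prop:perron} (the giant $2$-core of $\calG(n,d/n)$ is with high probability connected, non-bipartite, and non-subdivided whenever $d>1$), Perron--Frobenius then yields a real positive simple eigenvalue $\rho = d + o_n(1)$.

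For (ii), the workhorse is the moment method on $B^k$ with $k$ of order $\log n$. The key reduction is that for any integer $k$,
$$
    \spr(B - \rho\,\Pi)^{2k} \;\le\; \Tr\bigl[(B-\rho\Pi)^k\,\bigl((B-\rho\Pi)^k\bigr)^\ast\bigr],
$$
where $\Pi$ is the rank-one projector onto the Perron direction. Expanding, the right-hand side becomes a sum indexed by pairs of length-$k$ non-backtracking walks in $G$ sharing start and end edges. The plan is then to exploit \emph{tangle-freeness}: with probability $1 - o_n(1)$ every radius-$(C\log n)$ neighborhood of $\calG(n,d/n)$ contains at most one cycle, and on such tangle-free graphs any non-backtracking walk that self-intersects must wind repeatedly around a small fixed cycle. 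Walks then decompose into a tree-like piece, whose leading contribution is exactly cancelled by subtracting $\rho\Pi$, and a tangled piece whose expected count is bounded by $\mathrm{poly}(k)\cdot d^k$. Taking the $2k$-th root and choosing $k \asymp \log n$ yields $\spr(B - \rho\Pi) \le \sqrt{d} + o_n(1)$, which is equivalent to the stated bulk bound.

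The main obstacle is the tangled-walk enumeration in step (ii). Because $B$ is non-normal, one cannot simply compare $\spr(B)$ to $\|B\|$ and must push the moment $k$ up to order $\log n$, at which point the combinatorics of self-intersecting non-backtracking walk ``shapes'' on a sparse random graph becomes intricate; a naive union bound over walk topologies overshoots by polynomial factors that ruin the $1/(2k)$-th root. Bordenave--Lelarge--Massouli\'e resolve this with a careful inductive classification of tangle-free shapes and a coupling to the branching number of an idealized Galton--Watson tree, and reproducing that argument is substantially more involved than anything else in the present paper --- which is why the authors quote it as a black box rather than reprove it.
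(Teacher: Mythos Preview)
The paper does not prove this statement at all: it is quoted verbatim as \cite[Theorem 3]{bordenave-lelarge-massoulie} and used as a black-box input to the lower bound in Theorem~\ref{thm:main}. You have correctly identified this, and your sketch of the Bordenave--Lelarge--Massouli\'e trace-method argument (local weak convergence for the Perron eigenvalue, tangle-free path counting for the bulk) is an accurate high-level summary of their proof. There is simply nothing in the present paper to compare your proposal against.
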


\noindent This result in hand, we know w.h.p. the smallest real eigenvalue of $B$ is no smaller than $-\sqrt d - o_n(1)$, and so Theorem \ref{thm:lower} tells us
\[
  \vcn(G) \ge \frac{d^{3/2}}{2d - 1} + 1 + o_n(1)
\]
w.h.p. as well.

We need to show how to apply Theorem \ref{thm:upper} to bound $\vcn(G)$ from above. It is a standard lemma that for any constant $\gamma$, $\girth(G) \ge \gamma$ with constant probability. On this event, the results of Theorem 6 on the spectrum of $G$ still hold with probability $1 - o_n(1)$, and the average degree of $G$ is still $d \pm o_n(1)$, so we can apply Theorem 1 and deduce that, for any $\epsilon$ and any $d$, 
\[
    \vcn(G) \le \frac{d+1}{2\sqrt d} + 1 + \epsilon
\]
with probability bounded away from zero.

We now employ a martingale technique and combinatorial argument due to a string of papers establishing concentration for the chromatic number of \ER graphs \cite{shamir1987sharp,luczak,achlioptas-moore-reg}, and employed in \cite{banks2019lovasz} for a purpose analogous to ours; the presentation is indebted as well to \cite[Theorem 79]{balachandran}. Set $\kappa > 2$ and define a random variable $\Lambda\subset V$ as the largest set of vertices inducing a subgraph of $G$ with vector chromatic number $\kappa$. By Proposition 1, for any $\epsilon$, if we set $\kappa = \frac{d^{3/2}}{2d - 1} + 1 + \epsilon$ then $|\Lambda| = n$ with probability at least $\mu$, for some $\mu \in (0,1)$. 

Think of the random graph $G$ as being sampled in $n$ steps, where on the $i$th one we decide which of the edges will exist between vertex $i$ and the prior $i-1$. If we call $G_i$ the induced subgraph on vertices $[i]\subset V$, then the the random variables $G_1,...,G_n = G$ induce an increasing sequence of sigma algebras, and the sequence $\expected[|\Lambda| \mid G_i]$ is a martingale. The central claim in every application of this martingale method is that, as at each step we are revealing data about the neighborhood of a single vertex, the conditional expectation of $|\Lambda|$ can change by at most one: once the edges between $i$ and the previous vertices are revealed, we can simply delete $i$ from the graph, and our data about the remaining edges is unchanged.

By Azuma's inequality, then, 
\[
  \bbP\left[|(n-|\Lambda|) - \expected(n-|\Lambda|)| > t\sqrt n\right] \le 2e^{-t^2/2}.
\]
Choosing $t$ so that $2e^{-t^2/2}<\mu$, we immediately have $0 \in (\expected[n-|\Lambda|] - t\sqrt n,\expected[n - |x|] + t\sqrt n)$, and thus $n - |\Lambda| \le 2t\sqrt n$ with probability at least $1 - \mu$.

Now, let $\Upsilon \triangleq V\setminus \Lambda$ be the set of vertices which we \emph{cannot} $\kappa$-vector color. We will show that this set can be expanded to one which induces a three-colorable subgraph of $G$, and whose boundary with the remaining $\kappa$-vector colorable portion of $G$ is an independent set. If there are two vertices $i,j \in \Upsilon$ which are (1) not connected to one another by an edge and (2) are both connected to vertices in $\Upsilon$, form a set $\Upsilon_1 = \Upsilon \cup \{i,j\}$, and repeat this process to produce sets $\Upsilon \subset \Upsilon_1 \subset \cdots \subset \Upsilon_M$ until there are no such vertices to add. The boundary of $\Upsilon_M$ is an independent set (or else our expansion process could have continued for another step). Initially, $\Upsilon$ induces a subgraph with at least $|\Upsilon|/2$ edges (because if there were an isolated vertex, we could easily extend the vector coloring to it), and at each step, $|\Upsilon_t| = 2t + |\Upsilon|$, and $|E(\Upsilon_t)| = 3t + |E(\Upsilon)| \ge 3t + |\Upsilon|/2$. If our process progressed long enough for $|\Upsilon_t| = \alpha n$ for some $\alpha$, we'd have $t = (\alpha n - |\Upsilon|)/2$ and 
$$
  |E(\Upsilon_t)| \ge 3/2 (\alpha n - |\Upsilon|) + |\Upsilon|/2 = 3/2\alpha n - |\Upsilon|.
$$
Since $|\Upsilon| = o(n)$, this means the average degree of the subgraph induced by $\Upsilon_t$ would be $3(1 - o(1))$. A union bound shows, though, that small enough subgraphs of size linear in $n$ w.h.p. do not have average degree this high, so the process must terminate when $|\Upsilon_M| = o(n)$. Applying this union bound again, every subgraph of $|\Upsilon_M|$ must have average degree smaller than three, so $\Upsilon_M$ induces a subgraph with no three-core, and can be colored with three colors. 

We now need to produce a valid vector coloring on the entire graph, exploiting the preceding decomposition of $G$ into a subgraph with $\vcn = \kappa$, one with $\chi = 3$, and a independent set separating them. Call $\{\vec v_i\}_{i \in \Lambda}$ the vector coloring on $\Lambda$, and (perhaps by increasing the ambient dimension) let $\vec w_1,\vec w_2,\vec w_3$ be three unit vectors pointing to the corners of a unilateral triangle, and $\vec\zeta$ be a vector orthogonal to $\vec v_i$ and $\vec w_j$. Writing $\sigma : \Upsilon_M \to [3]$ for a valid three-coloring of $\Upsilon_M$, our vector coloring will be
\begin{align*}
  \vec z_i 
  = \begin{cases} 
    \frac{\sqrt{\kappa^2 - 1}}{\kappa}\vec v_i - \frac{1}{\kappa}\vec \zeta & i \in \Lambda \\
    \vec \zeta & i \in \delta \Upsilon_M \\
    \frac{\sqrt 8}{3}\vec w_{\sigma(i)} - \frac{1}{3}\vec \zeta & i \in \Upsilon_M
  \end{cases}
\end{align*}
One can now directly verify that
\begin{align*}
  \innerprod{\vec z_i}{\vec z_j} 
  &\le
  \begin{cases}
    -\frac{1}{\kappa} & \text{$i$ or $j$ is in $\Lambda$} \\
    -\frac{1}{4} & \text{$i$ or $j$ is in $\Upsilon_M$}.
  \end{cases}
\end{align*}


\subsection{Corollaries} 
\label{sub:corollaries}

\begin{proof}[Proof of Corollary \ref{cor:maxcut}]
  Our vectors $\vec v_i$ from the proof of Theorem \ref{thm:upper} can be used as input to the Goemans-Williamson rounding algorithm \cite{goemans1995improved} for producing large cuts in $G$. Let $X$ be the Gram matrix of the $\vec v_i$, sample $\vec g \sim \calN(0, X)$, and partition vertices according to the sign of the coordinates of $\vec g$. Calculation of the expected size of such a cut is standard: our vectors $v_i$ have inner product at most $-(1-1/m)\frac{2\sqrt\rho}{\rho + 1}$, so
  \begin{align*}
    \expected |\text{cut}| 
    &= \sum_{(i,j)\in E} \bbP[\text{$g_i$ and $g_j$ have different signs}] \\
    &= \sum_{(i,j)\in E} \frac{1}{\pi}\arccos\innerprod{v_i}{v_j} \\
    &\ge \sum_{(i,j)\in E} \left(\frac{1}{2} - \frac{\innerprod{v_i}{v_j}}{\pi}\right) \\
    &\ge |E|\left(\frac{1}{2} + \frac{1}{\pi}(1 - 1/m)\frac{2\sqrt\rho}{\rho + 1}\right)
  \end{align*}
  In the $\calG(n,d/n)$ case, our martingale calculation guarantees with high probability a vector coloring whose inner products satisfy
  $$
    \innerprod{v_i}{v_j} \le - \frac{2\sqrt d}{(\sqrt d + 1)^2},
  $$
  giving us a cut involving at least
  \[
    |E|\left(\frac{1}{2} + \frac{2}{\pi}\frac{\sqrt d}{(\sqrt d + 1)^2}\right) \approx |E|\left(\frac{1}{2} +0.63662\frac{\sqrt d}{(\sqrt d + 1)^2}\right)
  \]
  edges. One can compare this to a non-algorithmic result of Dembo, Montanari, and Sen \cite{dembo2017extremal} that the actual maximum cut severs
  \[
    \approx|E|\left(\frac{1}{2} + 0.7632 \frac{1}{\sqrt{d}} + o_d(\sqrt d)\right) 
  \]
  edges with high probability.
\end{proof}

\begin{proof}[Proof of Corollary \ref{cor:alon}]

  To prove Theorem 3, it suffices to produce a matrix $X \succeq 0$ with unit trace, and for which $\langle L(z), X \rangle$ is small. Returning to the vectors $\vec v_i$ from the proof of Theorem 1, 
  \[
  	X_{i,j} = \sqrt{\phi_i\phi_j}\innerprod{\vec v_i}{\vec v_j},
  \]
  so that $X_{i,i} = \phi_i$ and $X_{i,j} = -(1 - 1/m)\sqrt\rho (\rho + 1)^{-1}(\phi_i + \phi_j)$ for $(i,j) \in E$. Let us scale $\phi$ so that $\Tr X = \sum_i \phi_i = 1$.

  We will need one additional fact. Writing $d_i$ for the degree of vertex $i$, then from \eqref{eq:eig} and surrounding discussion,

  \begin{align*}
    \sum_i \phi_i d_i
    = \sum_i \sum_{j\in\delta i}(\phi_{i\to j} + \rho\phi_{j\to i})
    = (\rho + 1)\sum_i \phi_i = \rho + 1.
  \end{align*}
  Using this and our calculations from the proof of Theorem \ref{thm:upper},
  \begin{align*}
  	\langle X,L(z) \rangle 
  	&\le  z^2 -(1-1/m)\frac{\sqrt\rho}{\rho + 1}\sum_{(i,j) \in E} (\phi_i + \phi_j) + \sum_i \phi_i(d_i - 1) \\
  	&= z^2 - 2(1-1/m)\sqrt\rho \, z + \rho.
  \end{align*}
  Writing out $L(z) = z^2\1 - zA + D - \1$ and rearranging finishes the proof. Notice also that we've shown Lemma 4 is asymptotically tight on high-girth graphs:
  \[
  	\langle A, X \rangle \ge -2(1-1/m)\sqrt\rho = -2(1-1/m)\sqrt{\langle D - \1, X\rangle}.
  \]

\end{proof}


\section*{Acknowledgements} 
\label{sec:acknowledgements}

We are grateful to Nikhil Srivastava, Archit Kulkarni, Satyaki Mukherjee for illuminating conversations. J.B. is supported by the NSF Graduate Research Fellowship Program under Grant DGE-1752814; L.T is supported by NSF Grant CCF-1815434.


\printbibliography

\end{document}